\newcommand{\N}{\ensuremath{\mathbb{N}} }
\newcommand{\fbins}{\ensuremath{\{0,1\}^*}}
\newcommand{\bin}{\ensuremath{\{0,1\}}}
\newcommand{\infbins}{\ensuremath{\{0,1\}^{\omega}}}
\newcommand{\thh}{\ensuremath{\textrm{th}}}
\renewcommand{\gcd}{\ensuremath{\mathrm{gcd}}}
\newcommand{\finfbins}{\ensuremath{\bin^{\leq \omega}}}
\newcommand{\occb}[2]{\ensuremath{\mathrm{occ}_b(#1,#2)}}
\newcommand{\occ}[2]{\ensuremath{\mathrm{occ}(#1,#2)}}
\newtheorem{question}[theorem]{Question}
\title{Normal Sequences with Non-Maximal Automatic Complexity} 
\titlerunning{Normal Sequences with Non-Maximal Automatic Complexity} 
\author{Liam Jordon\footnote{Corresponding author}}{Department of Computer Science, Maynooth University, Ireland \and \url{https://www.researchgate.net/profile/Liam_Jordon}} {liam.jordon@mu.ie}{https://orcid.org/0000-0003-0583-666X}{Supported by the Irish Research Council's Government of Ireland Postgraduate Scholarship Programme. Grant number: GOIPG/2017/1200}
\author{Philippe Moser}{Department of Computer Science, Maynooth University, Ireland  \and \url{http://www.cs.nuim.ie/~pmoser/}}{philippe.moser@mu.ie}{}{}
\authorrunning{L. Jordon and P. Moser} 
\keywords{Automatic Complexity, finite-state complexity, normal sequences, Champernowne sequences, de Bruijn strings, Kolmogorov complexity} 
\begin{document}

\maketitle

\begin{abstract}
This paper examines Automatic Complexity, a complexity notion introduced by Shallit and Wang in 2001 \cite{DBLP:journals/jalc/ShallitW01}. We demonstrate that there exists a normal sequence $T$ such that $I(T) = 0$ and $S(T) \leq 1/2$, where $I(T)$ and $S(T)$ are the lower and upper automatic complexity rates of $T$ respectively. We furthermore show that there exists a Champernowne sequence $C$, i.e. a sequence formed by concatenating all strings of length one followed by concatenating all strings of length two and so on, such that $S(C) \leq 2/3$.
\end{abstract}

\section{Introduction}

Due to the uncomputability of Kolmogorov complexity, finite-state automata and transducers have acted as a popular setting to study the complexity of finite strings and infinite sequences. In this paper we examine the finite-state based complexity introduced by Shallit and Wang, which is analogous to Sipser's \emph{Distinguishing Complexity} \cite{DBLP:conf/stoc/Sipser83a}, known as \textit{Automatic Complexity} \cite{DBLP:journals/jalc/ShallitW01}. For a string $x$ of length $n$, its automatic complexity $A(x)$ is defined to be the minimum number of states required by any deterministic finite-state automaton such that $x$ is the only string of length $n$ the automaton accepts. A non-deterministic variation was first examined by Hyde \cite{HydeThesis}. In their paper, Shallit and Wang found upper and lower bounds for the automatic complexity of various sets of strings and of prefixes of the infinite Thue-Morse sequence. Expanding on this line of research, Kjos-Hanssen has recently studied the automatic complexity of Fibonacci and Tribonacci sequences \cite{autoFib}.

We continue this line of research by examining the automatic complexity of some \textit{normal} sequences. A binary sequence is normal number in the sense of Borel \cite{borelNormal} if for all $n$, every string of length $n$ occurs as a substring in the sequence with limiting frequency $2^{-n}$. The complexity of normal sequences has been widely studied in the finite-state setting for many years and a review of several old and new results can be found in \cite{DBLP:journals/jcss/KozachinskiyS21}.

Depending on how finite-state complexity is measured, normal sequences may have high or low complexity. For instance, if complexity is defined as compressibility by lossless finite-state compressors, normal sequences have maximum complexity. For example, combining results of Schnorr and Stimm \cite{DBLP:journals/acta/SchnorrS72} and Dai, Lathrop, Lutz and Mayordomo \cite{DBLP:journals/tcs/DaiLLM04} demonstrates that a sequence is normal if and only if it cannot be compressed by any lossless finite-state compressor (see \cite{DBLP:journals/tcs/BecherH13} for a proof). This is also true when the finite-state compressor is equipped with a counter \cite{DBLP:journals/jcss/BecherCH15} or when the reading head is allowed to move in two directions  \cite{DBLP:journals/iandc/CartonH15}. Another definition examines the length of the minimal input required to output a string via finite-state transducers and has been used in \cite{DBLP:journals/tcs/CaludeSR11,DBLP:journals/iandc/CaludeSS16,DotyMoserLossy,DBLP:conf/cie/DotyM07,DBLP:conf/sofsem/JordonM20}. It was demonstrated in Theorem $24$ of \cite{DBLP:journals/iandc/CaludeSS16} that in a complexity based on this approach, one can construct normal sequences with minimal complexity.

As automatic complexity is more of a ``combinatorial'' rather than an ``information content'' measurement\footnote{In an information content measurement, we would like there to be roughly $2^n$ objects with a complexity of $n$ while it trivially holds that all strings of the form $0^n$ and $1^n$ have an automatic complexity of $2$.}, this leads to the question as to how low can the automatic complexity of normal sequences be? Previously the automatic complexity of finite strings produced by linear feedback shift registers which have a maximal number of distinct substrings (otherwise known as $m$-sequences)  \cite{autoLSFR} along with sequences and finite strings which do not contain $k$-powers, i.e. substrings of the form $x^k$, have been studied \cite{DBLP:journals/combinatorics/HydeK15,autoFib,DBLP:journals/jalc/ShallitW01}. Normal sequences by definition contain $x^k$ as a substring infinitely often for every possible pair $(x,k)$. Is there a trade-off between the randomness of normal sequences resulting in high complexity, in the sense that they contain every string as a substring infinitely often, and the fact that some of those substrings have the form $x^k$ which results in low automatic complexity? 

We explore this question by constructing a normal sequence $T$ whose upper automatic complexity rate $S(T)$ is bounded above by $1/2$ and whose lower automatic complexity rate $I(T)$ is $0$. We then study a specific class of normal sequences known as \textit{Champernowne} sequences \cite{champernowne}, i.e. sequences formed by concatenating all strings of length one followed by all strings of length two and so on. It is widely known that Champernowne sequences are incompressible by the Lempel-Ziv $78$ algorithm and results by Lathrop and Strauss show that all sequences incompressible by Lempel-Ziv $78$ are normal \cite{DBLP:conf/sequences/LathropS97}. Due to this restriction on their construction, one may expect Champernowne sequences to have high automatic complexity. However, we demonstrate that there exists a Champernowne sequence $C$ built via a method presented by Pierce and Shields in \cite{pierce2000sequences} that satisfies $S(C) \leq 2/3$. It has previously been seen that Champernowne sequences built via their method are compressible by a variation of the Lempel-Ziv 77 \cite{pierce2000sequences} and the PPM$^*$ \cite{DBLP:conf/sofsem/JordonM21} compression algorithms.

\section{Preliminaries}

 We work with the binary alphabet \bin\, in this paper. A finite \emph{string} is an element of $\fbins$. A \emph{sequence} is an element of $\infbins$. \finfbins denotes the set $\fbins \bigcup \infbins$. The length of a string $x$ is denoted by $|x|$. We say $|S| = \omega$ for $S \in \infbins$. $\lambda$ denotes the string of length $0$. $\bin^n$ denotes the set of strings of length $n$. For $x \in \finfbins$ and $0 \leq i < |x|$, $x[i]$ denotes the $(i+1)^{\thh}$ bit of $x$ with $x[0]$ being the first bit. For $x \in \finfbins$ and $0 \leq i \leq j < |x|$, $x[i..j]$ denotes the \textit{substring} of $x$ consisting of its $(i+1)^{\thh}$ through $(j+1)^{\thh}$ bits. For $x\in \fbins$ and $y\in \finfbins$, $xy$ (sometimes written as $x \cdot y$) denotes the string (or sequence) $x$ concatenated with $y$. For a string $x$, $x^n$ denotes $x$ concatenated with itself $n$  times. For $x \in \finfbins$, a substring $y$ of $x$ is called a $k$-\emph{power} if $y = u^k$ for some string $u$. For $x \in \fbins$ and $y,z \in \finfbins$ such that  $z = xy,$ we call $x$ a \textit{prefix} of $z$ and $y$ a \emph{suffix} of $z$. We write $x[i..]$ to denote the suffix of $x$ beginning with its $(i+1)^{\thh}$ bit. The \emph{lexicographic-length} ordering of $\fbins$ is defined by saying for two strings $x,y$, $x$ comes before $y$ if either $|x| < |y|$ or else $|x| = |y|$ with $x[n] = 0$ and $y[n] = 1$ for the smallest $n$ such that $x[n] \neq y[n]$.
 
 Given strings $x,w$ we use the following notation to count the number of times $w$ occurs as a substring in $x$. The number of occurrences of $w$ as a substring of $x$ is given by
    \begin{align*}
        \occ{w}{x} = |\{i : x[i..i+|w|-1] = w\}|.
        \end{align*}
The block number of occurrences of $w$ as a substring of $x$ is given by
    \begin{align*}\occb{w}{x} = |\{i : x[i .. i+|w|-1] = w \wedge i \equiv 0 \bmod |w| \}|.\end{align*}
For example, $\occ{00}{0000} = 3$ while $\occb{00}{0000}=2$.
 
\noindent Automatic complexity is based on finite automata.
 \begin{definition}
 A \emph{deterministic finite-state automaton (DFA)} is a $4$-tuple $M = (Q,q_0,\delta,F)$, where
\begin{itemize}
    \item $Q$ is a non-empty, finite set of \emph{states},
    \item $q_0 \in Q$ is the \emph{initial state},
    \item $\delta : Q\times \bin \rightarrow Q$ is the \emph{transition function}, 
    \item $F \subseteq Q$ is the set of \emph{final / accepting states}.
\end{itemize}
\label{Auto: Def: DFA}
\end{definition}

\noindent A DFA $M$ can be thought of as a function $M: \fbins \rightarrow Q$ such that for all $x \in \fbins$ and $b \in \bin$, $M$ is defined by the recursion $M(\lambda) = q_0$ and $M(xb) = \delta(M(x),b)$. If $M(x) \in F$, we say $M$ \emph{accepts} $x$. We write $L(M)$ to denote the \emph{language} of $M$, i.e. the set of strings that $M$ accepts.

Shallit and Wang define \emph{automatic complexity} as follows.
\begin{definition}[\cite{DBLP:journals/jalc/ShallitW01}]
Let $x \in \fbins.$ The automatic complexity of $x$, denoted by $A(x)$, is the minimal number of states required by any DFA $M$ such that $L(M) \bigcap \bin^{|x|} = \{x\}.$
\end{definition}

\noindent We say a DFA $M$ \textit{uniquely accepts} a string $x$ if $L(M) \bigcap \bin^{|x|} = \{x\}.$

Shallit and Wang compute the following two ratios to examine the automatic complexity of sequences. \begin{definition}The \emph{lower} and \emph{upper} rates for the automatic complexity of a sequence $T$ are respectively given by
\begin{align*}I(T) =  \liminf\limits_{m \to \infty} \frac{A(T[0..m])}{m+1} \text{ and, } S(T) =  \limsup\limits_{m \to \infty} \frac{A(T[0..m])}{m+1}.\end{align*} \end{definition}
\noindent From the fact that for all $x \in \fbins$ it trivially holds that $A(x) \leq |x| + 2$, it follows that for all $T \in \infbins$, $0 \leq I(T) \leq S(T) \leq 1$.

Normal sequences and \textit{de Bruijn} strings which we use to build \emph{normal} sequences are defined as follows. 
\begin{definition}
A sequence $T\in \infbins$ is normal if for all $x \in \fbins$, \begin{align*}\lim_{m \to \infty} \frac{\occ{x}{T[0..m]}}{m+1} = 2^{-|x|}.\end{align*}
\end{definition}

\begin{definition}[\cite{debruijn1946combinatorial,FlyedeBruijn}]
A \emph{de Bruijn string} of order $n$ is a string $u \in \bin^{2^n}$ such that for all $w \in \bin^n, \occ{w}{u\cdot u[0..n-2]} = 1$.
\end{definition}

For example, $0011$ and $00010111$ are de Bruijn strings of order $2$ and $3$ respectively. We generally use $d_n$ to denote a de Bruijn string of order $n$. It is known that there are $2^{2^{n-1}-n}$ de Bruijn strings of order $n$ unique up to cycling, i.e. the two de Bruijn strings $0011$ and $0110$ are considered the same string for example when counting.

\section{Normal Sequences with Low Automatic Complexity}

In our first result we construct a normal sequence $T$ such that $I(T) = 0,$ that is, infinitely many prefixes have close to minimal automatic complexity. We furthermore show that $S(T) \leq 1/2$, indicating that the sequence does not have high complexity. We require the following variation of a result by Nandakumar and Vangapelli. 

\begin{theorem}[\cite{DBLP:journals/mst/NandakumarV16}]
Let $f:\N \to \N$ be increasing such that for all $n$, $f(n) \geq n^n$. Then every sequence of the form $T = d_1^{f(1)}d_2^{f(2)}d_3^{f(3)} \cdots$ where $d_n$ is a de Bruijn string of order $n$, is normal \footnote{Nandakumar and Vangapelli's original result was for when $f(n) = n^n$. However, their argument easily carries over for $f(n) \geq n^n$ also and this fact has been used by other authors such as in \cite{DBLP:journals/mst/CaludeS18,DBLP:journals/iandc/CaludeSS16}.}. \label{nanda vang thm}
\end{theorem}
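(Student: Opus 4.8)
The plan is to verify normality straight from the definition: fix an arbitrary word $w \in \fbins$ with $k = |w|$ and show that $\occ{w}{T[0..m]}/(m+1) \to 2^{-k}$ as $m \to \infty$. The entire argument rests on one combinatorial fact about de Bruijn strings, so I would establish that first. For $n \geq k$ I would show that, read cyclically, $w$ occurs in $d_n$ exactly $2^{n-k}$ times: the de Bruijn property says each of the $2^n$ words of length $n$ occurs exactly once cyclically in $d_n$, and every cyclic occurrence of $w$ extends uniquely (by reading $n-k$ further symbols around the cycle) to a length-$n$ word having $w$ as a prefix, of which there are exactly $2^{n-k}$; this is a bijection. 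Passing from the cyclic count to the linear block $d_n^{f(n)}$ costs only the wrap-around occurrences of the final copy, so writing $L_n = 2^n f(n)$ for the block length I would obtain $\occ{w}{d_n^{f(n)}} = 2^{-k}L_n - e_n$ with $0 \le e_n \le k-1$.

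Next I would decompose a prefix. Given $m$, let $N$ be the index of the block containing position $m$, and let $P_{N-1} = \sum_{j<N} L_j$ be the combined length of the completed blocks. I would split $T[0..m]$ into three parts: the short initial blocks $d_1^{f(1)}\cdots d_{k-1}^{f(k-1)}$, whose total length $P_{k-1}$ is a fixed constant; the full blocks $d_k^{f(k)}\cdots d_{N-1}^{f(N-1)}$, each contributing the correct count up to the error $e_n \le k-1$ from the first step; and the partial block inside $d_N^{f(N)}$, which consists of $s$ complete copies of $d_N$ (again exact up to $O(k)$) followed by a single prefix of $d_N$ of length $\rho < 2^N$. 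Summing, the numerator $\occ{w}{T[0..m]}$ differs from the ideal value $2^{-k}(m+1)$ by at most $O(P_{k-1}) + O(Nk) + 2^N$, where the three terms come respectively from the short blocks, from the $O(k)$ corrections on each of the fewer than $N$ blocks (together with the at most $N$ block-boundary straddles), and from the lone partial copy of $d_N$, whose internal distribution of $w$ we cannot control.

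Finally I would let the ratio vanish, and this is where the growth hypothesis enters. Since $m+1 > P_{N-1} \ge L_{N-1} = 2^{N-1}f(N-1) \ge 2^{N-1}(N-1)^{N-1}$, the denominator dominates every error term: $P_{k-1}$ is fixed, while $2^N$ and $Nk$ are both negligible against $2^{N-1}(N-1)^{N-1}$ once $N \to \infty$, and $m \to \infty$ forces $N \to \infty$. Dividing the bound above by $m+1$ then yields $|\occ{w}{T[0..m]}/(m+1) - 2^{-k}| \to 0$, which is exactly the normality condition for $w$; since $w$ was arbitrary, $T$ is normal.

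I expect the main obstacle to be the bookkeeping in the decomposition step, and in particular isolating the single uncontrolled partial copy of $d_N$ and checking that its length $2^N$ is swamped by the length $P_{N-1}$ of all preceding blocks. This is precisely the point at which the rapid growth of $f$ is indispensable: because one full de Bruijn copy already realizes the exact cyclic frequency $2^{-k}$, the only genuinely unmanaged contribution has length $2^N$, and it is harmless exactly because $P_{N-1}$ outgrows $2^N$ (for which $f \to \infty$ already suffices, with $f(n) \ge n^n$ giving ample room to spare).
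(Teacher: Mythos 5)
The paper does not prove this statement: it is imported from Nandakumar and Vangapelli, with a footnote observing that their argument for $f(n)=n^n$ carries over to $f(n)\geq n^n$. So there is no in-paper proof to compare against, and your proposal has to be judged on its own. It is correct, and it is essentially the standard sliding-window counting argument that underlies the cited result. The two load-bearing steps both check out: (i) the cyclic count $2^{n-k}$ follows exactly as you say from the bijection between cyclic positions of $d_n$ and $\bin^n$ (this is precisely the paper's definition of a de Bruijn string, $\occ{w}{d_n\cdot d_n[0..n-2]}=1$ for all $w\in\bin^n$), and passing to the linear string $d_n^{f(n)}$ loses at most $k-1$ occurrences because any window of length $k\leq 2^n$ starting at position $i\leq 2^nf(n)-k$ is the cyclic window at $i\bmod 2^n$; (ii) in the decomposition, the only uncontrolled term really is the single partial copy of $d_N$, of length less than $2^N$, and it is killed by $m+1>2^{N-1}f(N-1)$. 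The bookkeeping you worry about is fine: the sub-threshold blocks contribute a constant, the per-block defects and boundary straddles contribute $O(Nk)$, and $Nk/(2^{N-1}f(N-1))\to 0$ trivially. One honest observation your write-up makes, worth keeping, is that the argument only uses $f(N-1)\to\infty$, so the hypothesis $f(n)\geq n^n$ is far stronger than normality requires; this is consistent with the paper's footnote that the bound on $f$ is not the delicate part of the cited theorem.
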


\begin{theorem}
There is a normal sequence $T$ such that $I(T) = 0$ and $S(T) \leq \frac{1}{2}.$
\label{Normal low complexity}
\end{theorem}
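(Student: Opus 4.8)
The plan is to use Theorem~\ref{nanda vang thm} to get normality essentially for free: we build $T = d_1^{f(1)} d_2^{f(2)} d_3^{f(3)} \cdots$ for a suitable de Bruijn string $d_n$ of each order and a suitable increasing $f$ with $f(n) \geq n^n$, and then exploit the highly repetitive block structure $d_n^{f(n)}$ to drive the automatic complexity down. The two requirements $I(T)=0$ and $S(T)\leq 1/2$ pull in slightly different directions: the first asks for infinitely many prefixes that are \emph{extremely} compressible (complexity sublinear in the length, so the ratio tends to $0$), while the second asks that \emph{every} sufficiently long prefix have complexity at most about half its length. The freedom we have is in the choice of $f$ (which we will take to grow very fast, e.g. a tower or factorial-type growth, to make the ratio of a full block to everything before it tend to $1$) and possibly in the choice of the $d_n$.

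\medskip

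\noindent First I would establish the upper-bound engine: a lemma that a string consisting of many repetitions of a fixed block $u$ has small automatic complexity. Concretely, to uniquely accept $u^k$ among strings of length $k|u|$, one can use a DFA that reads $u$ once along a path of about $|u|$ states and then loops back, using a counter-like gadget that forces exactly $k$ traversals of the loop; the number of states needed is on the order of $|u| + (\text{something logarithmic or linear in } k)$, and crucially does \emph{not} grow like $k|u|$. So a prefix ending exactly at the end of a block $d_n^{f(n)}$ — which has length dominated by $f(n)\cdot 2^n$ when $f$ grows fast — should admit a DFA with a number of states that is negligible compared to the prefix length, giving ratio $\to 0$ and hence $I(T)=0$.

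\medskip

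\noindent For $S(T)\leq 1/2$ I would argue about an \emph{arbitrary} prefix $T[0..m]$. Such a prefix is of the form $d_1^{f(1)}\cdots d_{n-1}^{f(n-1)} d_n^{j} v$ where $v$ is a prefix of $d_n$ and $j < f(n)$. The idea is to build a DFA for the whole prefix by concatenating the cheap ``repeated-block'' gadgets for each completed power $d_i^{f(i)}$, plus a gadget for the partial block $d_n^j v$. Because the number of states each repeated-block gadget costs is roughly proportional to the block's period $2^i$ plus a counter overhead, while the block contributes length $f(i)\cdot 2^i$ to the prefix, the total state count is bounded by roughly half of $m+1$ once $f$ is chosen to grow fast enough that the tail block dominates; the constant $1/2$ (rather than something smaller) comes from the unavoidable cost of handling the final partial, possibly non-repetitive, segment of length up to $2^n$, together with the counters. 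I would make this precise by summing the per-block state costs against the per-block length contributions and taking the $\limsup$.

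\medskip

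\noindent The main obstacle I expect is the uniqueness requirement in automatic complexity: the DFA must accept $T[0..m]$ and \emph{no other} string of the same length, which is much more delicate than merely accepting it. Naively chaining loop gadgets can create extra accepted strings of the same length (for instance by taking a loop the ``wrong'' number of times in one block and compensating in another), so the counter gadgets must rigidly pin down how many times each loop is traversed; getting the counters to enforce exact repetition counts while keeping the state count down to the $1/2$ budget is where the real care lies. A secondary technical point is bounding the cost of uniquely accepting the partial final block $d_n^j v$, which is not itself a clean power — here I would lean on the de Bruijn property of $d_n$ (every length-$n$ window is distinct) to show that this segment can be recognised with a number of states proportional to its length, which is acceptable since that length is $o(m)$ after the fast-growing $f$ is fixed.
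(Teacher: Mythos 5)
Your sequence and your appeal to Theorem~\ref{nanda vang thm} match the paper, but the engine you propose for the upper bounds does not exist, and the place you locate the constant $1/2$ is not where it actually arises. The central problem is the ``counter-like gadget that forces exactly $k$ traversals of the loop'' with overhead ``logarithmic or linear in $k$'': a DFA has no such gadget. To track simultaneously the position inside $u$ and the number of completed traversals, the state must range over the product of the two, costing $\Theta(k|u|)$ states --- no compression at all (indeed, by Myhill--Nerode the singleton language $\{u^k\}$ needs about $k|u|$ states). The idea that makes the construction work --- and which you explicitly flag as ``the main obstacle'' without resolving it --- is that no counting is needed: unique acceptance is only required among strings of the \emph{target length}, so a bare cycle of $2^n$ states whose language is $\{d_n^j : j\ge 0\}$ already pins down $j=f(n)$ once the length is fixed. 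This is exactly the paper's automaton $M_1$, which reads $\overline{T_{n-1}}$ verbatim and loops on $d_n$, giving $A(\overline{T_n})\le|\overline{T_{n-1}}|+2^n+1$ and hence $I(T)=0$ because $f(n)=|\overline{T_{n-1}}|^{|\overline{T_{n-1}}|}$ makes $|T_n|$ dwarf everything else.

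Once the counters are gone you also cannot chain one bare loop per block, as your plan requires: two loops of different lengths let the automaton trade traversals between them while preserving total length, which is precisely the uniqueness failure you anticipate. The paper therefore uses exactly \emph{one} loop. For an arbitrary prefix $\overline{T_n}w$ with $w$ a proper prefix of $T_{n+1}$, its $M_2$ reads $\overline{T_{n-1}}$ verbatim, loops on $d_n$, and then reads $w$ verbatim, costing $|\overline{T_{n-1}}|+2^n+|w|+1$ states. Consequently your claim that the final non-repetitive segment has length $o(m)$ is false in the relevant regime: $w$ must be read state-by-state and can be as long as roughly $|T_n|$ before it becomes cheaper to switch to a single loop on $d_{n+1}$ (with the accepting state inside that loop). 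The ratio of states to length peaks at this crossover, where it is about $|w|/(|T_n|+|w|)=1/2$; that crossover, not ``the cost of a partial segment of length up to $2^n$ plus counters,'' is the source of the bound $S(T)\le 1/2$. A smaller omission: determinism of $M_2$ at the loop's exit needs $d_n[0]\neq w[0]=d_{n+1}[0]$, which the paper arranges by making $d_n$ start with $1$ for odd $n$ and $0$ for even $n$; your proposal leaves the choice of the $d_n$ unconstrained.
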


\begin{proof}
We recursively define the sequence $T = T_1T_2\ldots$ and the function $f:\N \to \N$ as follows. For all $j \geq 1$, let $d_j$ be a de Bruijn string of order $j$ such that if $j$ is odd, $d_j$ begins with a $1$ and if $j$ is even, $d_j$ begins with a $0$. We set $f(1) = 2$ and $T_1 = d_1^{f(1)} = d_{1}^{2}$. For $j \geq 2$, we define $f(j)  = |T_1 \ldots T_{j-1}|^{|T_1 \ldots T_{j-1}|}$ and $T_j = d_j^{f(j)}$. Note that $f(1) > 1$ and for all $j\geq 2$, $f(j) \geq |T_{j-1}|^{|T_{j-1}|}$ and that $|T_{j-1}| = 2^{j-1}f(j-1) \geq j.$ Hence by Theorem \ref{nanda vang thm}, $T$ is normal. For simplicity, we write $\overline{T_j}$ for the prefix $T_1\cdots T_j$ of $T$.


We first show that $I(T) = 0.$ Consider a prefix of the form $\overline{T_n}$. $\overline{T_n}$ is uniquely accepted by the DFA $M_1$ which has a state for each bit of $\overline{T_{n-1}}$ followed by a loop of length $2^{n}$ for the string $d_n$ whose root state is the only accepting state, and an error state. $M_1$ can be seen in Figure \ref{Thm low complexity Machines}. $M_1$ has $|\overline{T_{n-1}}| + 2^n + 1$ states. Thus we have that

\begin{align*}
    \frac{A(\overline{T_{n}})}{|\overline{T_n}|} & \leq \frac{|\overline{T_{n-1}}| + 2^n + 1}{|T_n|+|\overline{T_{n-1}}|} 
     = \frac{|\overline{T_{n-1}}| + 2^n + 1}{2^nf(n) + |\overline{T_{n-1}}|} \\
    & \leq \max\Big\{ \frac{|\overline{T_{n-1}}|}{2^n |\overline{T_{n-1}}|^{|\overline{T_{n-1}}|}} , \frac{2^n + 1}{|\overline{T_{n-1}}|} \Big\} 
     \leq \max\Big\{ \frac{1}{2^n} , \frac{2^n + 1}{(n-1)^{n-1}} \Big\}. \\
\end{align*} 
Hence it follows that $I(T) = 0$.

Next consider an arbitrary prefix $T[0..m]$ of $T$. Let $n$ be largest such that $\overline{T_n}$ is a prefix of $T[0..m]$ but $\overline{T_{n+1}}$ is not. Thus $T[0..m] = \overline{T_n}\cdot w$ for some $w \in \fbins$ and is uniquely accepted by the DFA $M_2$ in Figure \ref{Thm low complexity Machines}. $M_2$ has a state for each bit of $\overline{T_{n-1}}$, followed by a loop of length $2^{n}$ for the string $d_n$, followed by a state for each bit of $w$ and an error state. $M_2$ has $|\overline{T_{n-1}}| + 2^n + |w| + 1$ states.

Consider when $1 \leq |w| \leq 2^n(f(n) - 1) + 2^{n+1}$. Note that \begin{align*}
    |T_{n+1}| = 2^{n+1}f(n+1) =2^{n+1}(f(n+1) - 1) + 2^{n+1} > 2^n(f(n) - 1) + 2^{n+1}
\end{align*}
so $w$ can be this length. Note also that $M_2$ has at most $|\overline{T_n}| + 2^{n+1} + 1$ states for such $w$. Hence for such $w$ we have that \begin{align}
    \frac{A(T[0..m])}{m+1} & \leq \frac{|\overline{T_{n-1}}| + 2^n + |w| + 1}{|\overline{T_n}| + |w| } \notag\\
    & \leq \frac{|\overline{T_{n-1}}| + 2^n + 2^n(f(n) - 1) + 2^{n+1} + 1}{|\overline{T_n} |+ 2^n(f(n) - 1) + 2^{n+1}} \notag\\
    & = \frac{|\overline{T_{n-1}} | + |T_n| + 2^{n+1}+1}{|\overline{T_{n-1}}| + 2|T_n| + 2^n} \notag\\
     &= \frac{|\overline{T_{n-1}}| + 2^n(|\overline{T_{n-1}}|^{|\overline{T_{n-1}}|} + 2) + 1}{|\overline{T_{n-1}}| + 2(2^n|\overline{T_{n-1}}|^{|\overline{T_{n-1}}|}) + 2^n} \notag\\
    & \leq \max \Big \{ \frac{1 + 2^n(|\overline{T_{n-1}}|^{|\overline{T_{n-1}}|-1} + 2|\overline{T_{n-1}}|^{-1})}{1 + 2(2^n |\overline{T_{n-1}}|^{|\overline{T_{n-1}}|-1})}, \frac{1}{2^n} \Big \}. \label{sup 1}
\end{align}
Note Equation \eqref{sup 1} approaches $1/2$ as $n$ increases.

Furthermore consider when $2^n(f(n) - 1) + 2^{n+1} < |w| \leq |T_{n+1}|$. Instead of looping on $d_n$, it becomes more beneficial to loop on $d_{n+1}$ via a DFA similar to $M_1$ in Figure \ref{Thm low complexity Machines} where the accepting state is a single state in the loop depending on the length of $w$. Thus for such prefixes $A(\overline{T_n}\cdot w) \leq |\overline{T_n}| + 2^{n+1} + 1$, i.e. it does not depend on $w$. Hence the ratio $A(T[0..m])/(m+1)$ decreases and approaches $I(T)$ for such $w$.

Therefore, by Equation \eqref{sup 1}, $S(T) \leq \frac{1}{2}$.
\end{proof}
\begin{figure}[h] 
\centering 

\begin{tikzpicture}[->,>=stealth',shorten >=1pt,auto,node distance=2.25cm,
        scale = 1,transform shape]
\node[state,initial] (qs) {};
\node[state, right of = qs, accepting] (q1) {};

\draw (qs) edge[dashed] node[above,midway]{ $\overline{T_{n-1}}$} (q1)

(q1) edge[dashed,loop above] node{$d_n$} (q1);

\end{tikzpicture}
\qquad
\begin{tikzpicture}[->,>=stealth',shorten >=1pt,auto,node distance=2.25cm,
        scale = 1,transform shape]
\node[state,initial] (qs) {};
\node[state, right of = qs] (q1) {};
\node[state, right of = q1, accepting] (q2) {};

\draw (qs) edge[dashed] node[above,midway]{ $\overline{T_{n-1}}$} (q1)

(q1) edge[dashed,loop above] node{$d_n$} (q1)

(q1) edge[dashed] node[above,midway]{ $w$} (q2);

\end{tikzpicture}
\caption{DFA $M_1$ (left) and $M_2$ (right) from Theorem \ref{Normal low complexity}. The error state (the state traversed to if the bit seen is not the expected bit) and arrows to it are not included. By the construction of $T$, $d_n[0] \neq w[0]$ to ensure determinism.}
\label{Thm low complexity Machines}
\end{figure}

\section{Automatic Complexity of Champernowne Sequences}

In this section we present a Champernowne sequence with an upper automatic complexity rate bounded above by $2/3$. 

\begin{definition}
 A sequence $C$ is a \emph{Champernowne} sequence if $C=C_1C_2C_3\ldots,$ such that for each $n$, $C_n$ is a concatenation of all strings of length $n$ exactly once. That is, for all $x \in \bin^n, \,\occb{x}{C_n} = 1.$
\end{definition}

Unlike Champernowne's original sequence which was a concatenation of all strings in lexicogrpahic-length order ($0100011011000...$), we emphasise that the set of Champernowne sequences do not require strings to be in length-lexicographic order for the construction. There are $2^n!$ possible choices for zone $C_n$ in a Champernowne sequence. For instance, $00011011$ and $11100001$ are two possibilities for $C_2$. 

We now describe Pierce and Shields' construction of Champernowne sequences from \cite{pierce2000sequences}. Suppose we wished to construct substring $C_n$. Let $d_n$ be a de Bruijn string of order $n$. For $0 \leq j \leq 2^n-1$, let $d_{n,j}$ represent a cyclic shift to the left of the first $j$ bits of $d_n.$ That is, $d_{n,j} = d_{n}[j..2^n-1]\cdot d_{n}[0..j-1].$ We write $d_n$ instead of $d_{n,0}$ when no cyclic shift occurs. Note that each $n$ can be written uniquely in the form $n = 2^st$ where $s \geq 0$ and $t \geq 1$ where $t$ is odd. Each substring $C_n$ is broken into further substrings $C_n = B_{n,0}\cdot B_{n,1}\cdots B_{n,2^s - 1}$ where $B_{n,j}$ is a concatenation of $d_{n,j}$ with itself $t$ times. That is, $B_{n,j} = (d_{n,j})^t$. Hence, for example, if $n$ is odd then $C_n = (d_n)^n$ and if $n = 2^k$ for $k \geq 1$, $C_n = d_nd_{n,1}\cdots d_{n,n-1}.$ 

To help the reader visualise this, the result of using the lexicographic least de Bruijn strings of order $3,\,4$ and $6$ to build $C_3,\,C_4$ and $C_6$ via Pierce and Shields' method are provided in Figures \ref{C34} and \ref{C6}. An algorithm to construct the lexicographic least de Bruijn strings was first provided by Martin in 1934 which requires exponential space \cite{martin1934}. Later works by Fredricksen, Kessler and Maiorana led to the FKM-algorithm which only requires $O(n)$ space to construct such strings \cite{DBLP:journals/dm/FredricksenK86,DBLP:journals/dm/FredricksenM78}.

\begin{figure}[h]
\centering
\begin{tabular}{l | l}
000\textcolor{blue}{10111} & 0000\textcolor{blue}{100110101111} \\
\textcolor{blue}{000}10111 & \textcolor{blue}{000}1001101011110\\
00010111 & 0010011010111100\\
 & 0100110101111000
\end{tabular}
\caption{Concatenating the three rows on the left hand side produces the substring $C_3$ and concatenating the four rows on the right hand side produces the substring $C_4$ if $d_3$ and $d_4$ are chosen to be the least lexicographic de Bruijn string of their order respectively.}
\label{C34}
\end{figure}

\begin{figure}[h]
\centering
\begin{tabular}{l}
000000\textcolor{blue}{1000011000101000111001001011001101001111010101110110111111} \\
\textcolor{blue}{0000001000011000101000111001001011001101001111010101110110111111} \\
\textcolor{blue}{0000001000011000101000111001001011001101001111010101110110111111} \\
\textcolor{blue}{00000}10000110001010001110010010110011010011110101011101101111110 \\
0000010000110001010001110010010110011010011110101011101101111110 \\
0000010000110001010001110010010110011010011110101011101101111110
\end{tabular}
\caption{Concatenating the six rows produces the substring $C_6$ where $d_6$ is chosen to be the lexicographic least de Bruijn string of order $6$. The first three rows are $B_0$ while the second three rows are $B_1$.}
\label{C6}
\end{figure}

\noindent In Figures \ref{C34} and \ref{C6} above, the bits shaded in blue indicate the bits of each zone read on a single traversal of the loops described in the proof of Theorem \ref{Champernowne Thm} and shown in Figure \ref{fig: Case 1 2 3 4} in the case where either $n$ or $n+1$ is $3,\,4$ or $6$ respectively.

We re-present Pierce and Shields' proof that their construction builds Champernowne sequences using our notation below. The proof requires some basic results and definitions which can be seen in an undergraduate group theory course. We omit specifics as they are unimportant to the paper as a whole but point towards \cite{rotmanGroups} for those interested.

\begin{lemma}[\cite{pierce2000sequences}]
\label{Enum Proof}
Let $C \in \infbins$ be constructed via Pierce and Shields' construction. Then $C$ is a Champernowne sequence.
\end{lemma}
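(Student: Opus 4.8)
The plan is to reduce the statement to a per-zone claim and then analyse each zone $C_n$ by elementary group theory on $\mathbb{Z}/2^n\mathbb{Z}$. Since $C = C_1C_2\cdots$ and a Champernowne sequence is exactly one for which $\occb{x}{C_n} = 1$ for every $n$ and every $x \in \bin^n$, it suffices to fix $n$, write $n = 2^st$ with $t$ odd as in the construction, and show that the $2^n$ length-$n$ blocks of $C_n$ (the substrings $C_n[kn..(k+1)n-1]$ for $0 \le k < 2^n$, recalling $|C_n| = 2^nn$) are exactly the $2^n$ distinct strings of length $n$, each occurring once. Because there are precisely $2^n$ blocks and $2^n$ target strings, it is enough to show that these blocks realise every cyclic window of the de Bruijn string $d_n$ exactly once.

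The key observation is that traversing the periodic word $(d_{n,j})^\infty$ in steps of length $n$ visits the cyclic windows of $d_n$ indexed by an arithmetic progression modulo $2^n$. First I would record that $B_{n,j}$ begins in $C_n$ at position $j\cdot 2^nt$, which is a multiple of $n = 2^st$, so the length-$n$ block decomposition of $C_n$ is the concatenation of the block decompositions of the $B_{n,j}$. Next, since $(d_{n,j})^\infty[i] = (d_n)^\infty[i+j]$ by the definition of the cyclic shift, the block of $B_{n,j} = (d_{n,j})^t$ beginning at position $kn$ equals the cyclic window of $d_n$ beginning at position $(j+kn) \bmod 2^n$. Each $B_{n,j}$ has length $2^nt$, an exact multiple of the period $2^n$, and contributes exactly $2^nt/n = 2^{n-s}$ such blocks, so the window positions it realises are $\{\,(j+kn)\bmod 2^n : 0 \le k < 2^{n-s}\,\}$.

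The heart of the argument is then purely group-theoretic. Since $t$ is odd, $\gcd(n,2^n) = 2^s$, so the subgroup $\langle n\rangle$ of $\mathbb{Z}/2^n\mathbb{Z}$ generated by $n$ equals $2^s\,\mathbb{Z}/2^n\mathbb{Z}$, a subgroup of order $2^{n-s}$ and index $2^s$. Hence the positions realised by $B_{n,j}$ form precisely the coset $j + \langle n\rangle$, and as $j$ ranges over $0,1,\dots,2^s-1$ these cosets are distinct and partition $\mathbb{Z}/2^n\mathbb{Z}$. Consequently the blocks of $C_n$ realise each of the $2^n$ cyclic windows of $d_n$ exactly once; since $d_n$ is a de Bruijn string of order $n$, those windows are exactly the $2^n$ distinct strings of length $n$, which gives $\occb{x}{C_n} = 1$ for every $x \in \bin^n$ and hence that $C$ is Champernowne.

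The step I expect to need the most care is the alignment bookkeeping that lets the group theory apply: verifying that each $B_{n,j}$ begins at a position of $C_n$ divisible by $n$, that its length is an exact multiple of both $n$ and the period $2^n$ so that no length-$n$ block straddles the boundary between $B_{n,j}$ and $B_{n,j+1}$, and that windows wrapping across successive copies of $d_{n,j}$ are genuine cyclic windows of $d_n$ (legitimate because the de Bruijn property is stated for $d_n\cdot d_n[0..n-2]$). Once these divisibility facts and the computation $\gcd(n,2^n) = 2^s$ are in place, the coset partition, and with it the conclusion, follow immediately.
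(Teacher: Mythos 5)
Your proof is correct and follows essentially the same route as the paper's: both identify the length-$n$ blocks of $B_{n,j}$ with the coset $j + \langle n\rangle$ of the index-$2^s$ subgroup generated by $n$ in the cyclic group of order $2^n$ (the paper writes this multiplicatively via a bijection $f:G_{2^n}\to\bin^n$, you write it additively in $\mathbb{Z}/2^n\mathbb{Z}$), and both conclude from the coset partition and the de Bruijn property that $\occb{x}{C_n}=1$. Your version is slightly more explicit about the alignment bookkeeping and the distinctness of the cosets for $j=0,\dots,2^s-1$, but the underlying argument is the same.
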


\begin{proof}
Let $C \in \infbins$ be as described. In order to show that $C$ is a Champernowne sequence we must show that for each zone $C_n$, for all $x \in \bin^n, \, \occb{x}{C_n} = 1.$

Consider substring $C_n$. Let $G_{2^n}$ be the cyclic group of order $2^n$, i.e.  $G_{2^n} = \langle x\, | x^{2^n} = e\rangle$, where $e=x^0$ is the identity element and $x$ is the generator of the group. There exists a bijective mapping  $f:G_{2^n} \rightarrow \{0,1\}^n$ such that for $0 \leq a < 2^n$, $x^a$ is mapped to the substring of $d_n$ of length $n$ beginning in position $a$ when $d_n$ is viewed cyclically. That is, $f(e) = d_n[0.. n-1], \,f(x) = d_n[1 .. n], \ldots \,f(x^{2^n - 1}) = d_n[2^n - 1]\cdot d_n[0 .. n-2].$

Let $s\geq 0$ and $t \geq 1$ where $t$ is odd such that $n=2^st$. Consider the subgroup $\langle x^n\rangle$ of $G_{2^n}$. From group theory it follows that \[|\langle x^n\rangle | = \frac{2^n}{\gcd(n,2^n)} = 2^{2^st - s} = 2^{n - s}.\] So \[\langle x^n\rangle = \bigcup\limits_{i=0}^{2^{n-s}-1}\{x^{in \bmod 2^n}\} = \{e,x^n,x^{2n},\ldots x^{(2^{n-s}-1)n \bmod 2^n}\}.\] Concatenating the result of applying $f$ to each element of $\langle x^n\rangle $ beginning with $e$ in the natural order gives the string \[\sigma = f(e)\cdot f(x^n)\cdot f(x^{2n})\cdots f(x^{(2^{n-s}-1)n \bmod 2^n}).\] $\sigma$ can be thought of as beginning with the prefix of $d_n$ of length $n$, cycling through $d_n$ in blocks of size $n$ until the block containing $d_n$'s suffix of length $n$ is seen. As $(2^{n - s}n)/2^n = t$, we have that $ \sigma = (d_n)^t = B_0$.

As $|G_{2^n}|/|\langle x^n\rangle | = 2^s$, there are $2^s$ cosets of $\langle x^n\rangle$ in $G_{2^n}$. As cosets are disjoint, each represents a different set of $2^{n - s}$ strings of $\bin^n$. Specifically each coset represents some $B_j = (d_{n,j})^t$ block. Therefore, for each $x \in \bin^n$, for some $j \in \{0,\ldots ,2^s-1\}$, $\occb{x}{B_j} = 1$ and $\occb{x}{B_i} = 0$ for each $i\neq j$. Thus $\occb{x}{C_n} = 1$. 
\end{proof}

Before examining the main result of this section, we require the following result from number theory.
\begin{theorem}
For $a,b,c \in \mathbb{Z}$, consider the Diophantine equation $ ax + by = c$. If there exists a solution to the equation $(x_0,y_0)$ where $x_0,y_0 \in \mathbb{Z}$, then all other solutions $(x',y')$ such that $x',y'\in \mathbb{Z}$ are of the form $x' = x_0 + (b/g)d$ and $ y' = y_0 - (a/g)d$  where $d\in \mathbb{Z}$ is arbitrary and $g = \gcd(a,b)$.
\label{dio thm}
\end{theorem}

Henceforth, we write $\mathrm{PSC}$ to denote the set of Champernowne sequences constructed using Pierce and Shields' method such that for each zone $C_n$ of the sequences, a de Bruijn string $d_n$ of order $n$ with prefix $0^n$ was used to construct it.
In the following theorem we show that there exists sequences in $\mathrm{PSC}$ which have non-maximal automatic complexity as their upper automatic complexity rates are bounded above by $2/3$. We also briefly discuss their lower automatic complexity rates in Section \ref{Section lower rate}.
\begin{theorem}
There exists $C \in \mathrm{PSC}$ such that $S(C) \leq \frac{2}{3}$.
\label{Champernowne Thm}
\end{theorem}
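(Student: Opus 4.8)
The plan is to mimic the strategy of Theorem~\ref{Normal low complexity}: build a DFA that reads the prefix up to the start of the current zone $C_n$ on a simple path, then exploits the periodic structure of $C_n$ (and of the following zone $C_{n+1}$) to accept the remainder via a small number of loops, so that the state count grows like a constant fraction of the prefix length. The crucial difference from the normal-sequence case is that a Champernowne zone $C_n$ is \emph{not} a single repeated de Bruijn word in general; when $n=2^s t$ with $s\geq 1$ it is a concatenation $B_{n,0}B_{n,1}\cdots B_{n,2^s-1}$ of $2^s$ distinct blocks, each of which is a $t$-fold power $(d_{n,j})^t$. So I would first choose $C\in\mathrm{PSC}$ by fixing, for every $n$, a de Bruijn string $d_n$ whose first bit differs from the first bit of the relevant looping word of the adjacent zone, exactly as the earlier proof arranges $d_n[0]\neq w[0]$ to keep the automaton deterministic (see Figure~\ref{Thm low complexity Machines}).

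Next I would analyse the automatic complexity of a prefix ending partway through a zone. Write such a prefix as $\overline{C_{n-1}}\,w$ where $\overline{C_{n-1}}=C_1\cdots C_{n-1}$ and $w$ is a prefix of $C_n$ (possibly extending into $C_{n+1}$). I would lay down $|\overline{C_{n-1}}|$ states as a plain path reading $\overline{C_{n-1}}$, and then build loops inside the zone. When $n$ is odd, $C_n=(d_n)^n$ is a clean $n$-fold power, so a single loop of length $2^n$ reading $d_n$ handles the whole zone with an exit arc of length at most $2^n$ to the accepting state; this contributes roughly $2|C_n|/n = 2\cdot 2^n$ states against a prefix of length up to $|C_n|=2^n n$, giving a ratio tending to $2/n\to 0$ on these zones. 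When $n=2^s t$ with $s\geq 1$, each block $B_{n,j}=(d_{n,j})^t$ is a $t$-power, so I loop of length $2^n$ once per block: the dominant cost is one loop per block, i.e.\ about $2^s\cdot 2^n$ states for a zone of length $2^n n = 2^n 2^s t$, and the worst case is $s=0$-adjacent, namely when $t$ is smallest. The extremal ratio arises when the period-to-zone ratio is largest; the figures for $C_3,C_4,C_6$ (Figures~\ref{C34},~\ref{C6}) are exactly the cases flagged as achieving the worst behaviour, and the $2/3$ bound should emerge from the case $t=1$ (pure de Bruijn powers) combined with an exit arc, where the state count is approximately $|\overline{C_{n-1}}| + 2^n + |w|$ over a denominator $|\overline{C_{n-1}}|+|w|$ with $|w|$ ranging up to the point where switching to a loop on $d_{n+1}$ becomes cheaper.

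The key quantitative step is to optimise over $|w|$ and over which loop (on $d_n$ versus $d_{n+1}$) to use, showing that the supremum of $A(C[0..m])/(m+1)$ across all break-points is at most $2/3$ in the limit. This is where Theorem~\ref{dio thm} enters: to place the accepting state correctly when $w$ ends in the middle of a block, I need the exit position within a loop to realise an arbitrary residue, and the Diophantine bookkeeping guarantees that a loop of length $2^n$ reading $d_{n+1,j}$ can be entered and exited so that exactly $|w|$ bits beyond $\overline{C_n}$ are consumed; solving $2^n q + r = |w|$ and shifting solutions via $(x+(b/g)d,\,y-(a/g)d)$ lets me trade off full loop traversals against a short exit path. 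I expect the main obstacle to be precisely this interface between consecutive zones of different $2$-adic valuations: making a single automaton handle the transition $C_n\to C_{n+1}$ deterministically while keeping the state count below $\tfrac{2}{3}(m+1)$, since the looping word must change exactly at the zone boundary and the de Bruijn strings must be pre-selected so their leading bits force the correct branch. The odd/even alternation and the cyclic-shift structure make the $s=1$ zones (where $C_n$ splits into two blocks) the delicate case, and I would verify the bound there most carefully, treating the long-$w$ regime by switching to the $d_{n+1}$ loop exactly as the earlier proof switches from $d_n$ to $d_{n+1}$ once $|w|$ is large.
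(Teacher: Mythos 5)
Your high-level plan (read $\overline{C_{n-1}}$ verbatim, compress $C_n$ and $C_{n+1}$ into loops, switch to a loop on the next zone once the tail gets long) matches the paper's, but two of your concrete design choices break down. First, for a zone with $n=2^st$, $s\geq 1$, you propose one loop of length $2^n$ \emph{per block} $B_{n,j}=(d_{n,j})^t$, costing about $2^s\cdot 2^n$ states. A DFA with several loops in series all of the same length $2^n$ cannot uniquely accept: shifting one traversal from the loop for $B_{n,j}$ to the loop for $B_{n,j+1}$ yields a different accepted string of the same total length. The paper avoids this by choosing each $d_n$ with prefix $0^n$ and suffix $1^n$, which lets the entire zone $C_n=B_0B_1\cdots B_{2^s-1}$ be re-parsed as $2^s$ traversals of a \emph{single} loop reading $1v_nd_n^{t-1}0^{n-1}$, of \emph{odd} length $2^nt-1$; the adjacent loop for $C_{n+1}$ has even length $2^{n+1}$, and Theorem \ref{dio thm} is then invoked to show the resulting two-variable Diophantine equation has exactly one nonnegative solution, which is what unique acceptance requires. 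In your sketch the Diophantine theorem is instead used to ``place the accepting state,'' and the uniqueness question --- the actual crux --- is never addressed; your determinism fix (choosing leading bits to differ, as in Theorem \ref{Normal low complexity}) also does not engage with the real issue, which is re-entering and exiting a loop at the $0^n/1$ boundary inside a de Bruijn word.

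Second, your quantitative analysis locates the extremal ratio in the wrong place. You predict the $2/3$ emerges from $t=1$ (zones that are pure concatenations of cyclic shifts, $n$ a power of $2$); the paper's appendix computes that case and gets a limit of $4/7$. The binding case is $n$ even but \emph{not} a power of $2$ ($t\geq 3$), and the $2/3$ does not come from a ratio of the form $(|\overline{C_{n-1}}|+2^n+|w|)/(|\overline{C_{n-1}}|+|w|)$ as in the normal-sequence proof, where zone lengths grow so fast that $|\overline{T_{n-1}}|$ is negligible. Here $|C_{n+1}|\approx 2|C_n|$ and $|\overline{C_{n-1}}|$ is comparable to $|C_n|$, so the worst moment is partway into $C_{n+2}$: at the switchover point $|p_{n+2}|$ is roughly $|C_{n+2}|/2$ (the size of the best loop for $C_{n+2}$), giving states $\approx|\overline{C_n}|+|C_{n+2}|/2\approx 4n2^n$ against a prefix of length $\approx|\overline{C_{n+1}}|+|C_{n+2}|/2\approx 6n2^n$. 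Without the single-odd-loop construction and the case analysis over the $2$-adic structure of $n$, $n+1$ and $n+2$, the $2/3$ bound is not reachable from your outline.
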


\begin{proof}

Let $C \in \mathrm{PSC}$ and consider an arbitrary prefix $C[0..m]$ of $C$. Again we use $\overline{C_n}$ to denote the prefix $C_1C_2\cdots C_n$.  Let $n$ be largest such that $\overline{C_{n+1}}$ is a prefix of $C[0..m]$ but $\overline{C_{n+2}}$ is not.

To examine $A(C[0..m])$ we build automata which make use of two loops which exploit the repetitions of the de Bruijn strings in $C_n$ and $C_{n+1}$. The automata have a single accepting state which depend on the length of the prefix being examined. There are four cases to consider which are dependent on the the value of $n$ and can be seen in Figure \ref{fig: Case 1 2 3 4}. 

\begin{itemize}
    \item \textbf{Case 1}: $n$ is a power of $2$,
    \item \textbf{Case 2}: $n+1$ is a power of $2$,
    \item \textbf{Case 3}: $n$ is even but not a power of $2$,
    \item \textbf{Case 4}: $n+1$ is even but not a power of $2$.
\end{itemize}

Notation wise, we let $v_n$ be the string such that $d_n = 0^n1v_n$. Note that $d_n[n] = 1$ as otherwise the string $0^n$ would appear twice as a substring of $d_n$. Also note that the final bit of $v_n$ must be a $1$.

Suppose we are in Case $3$ where $n = 2^st$ where $s \geq 1$ and $t \geq 3$ where $t$ is odd. We examine Case $3$ as later calculations which maximise the number of states needed require for the possibility that $n+2$ is even but not a power of $2$.

Let $p_{n+2}$ denote the prefix of $C_{n+2}$ such that $C[0..m] = \overline{C_{n+1}}p_{n+2}$. The automaton for Case $3$ in Figure $\ref{fig: Case 1 2 3 4}$ accepts $C[0..m]$ by reading the prefix $\overline{C_{n-1}}\cdot0^n$ state by state, then traversing the first loop $2^s$ times, then reading $0^{2^s + 1}$, then traversing the second loop fully $n$ times and then up to reading $1v_{n+1}$ on the $n+1^{\thh}$ traversal of it. Then, depending on the length of $p_{n+2}$, we read the final $0^{n+1}$ of the second loop and exit it to read the remainder of $C_{n+2}[n+1..]$ as needed. That is, if $|p_{n+2}| \leq n+1$ then the final state is contained in the last $n+1$ states (including the root state) of the second loop of the DFA, else once finishing the loop, we traverse through $|p_{n+2}| - (n+1)$ extra states to the accepting state. 

To see that $C[0..m]$ is accepted by the DFA, note that the traversal through the DFA described above can be factored as $\overline{C_{n-1}}xp_{n+2}$ where \[x= 0^n(1v_nd_n^{t-1}0^{n-1})^{2^s}0^{2^s + 1}(1v_{n+1}0^{n+1})^n(1v_{n+1}).\] Note that $x = C_nC_{n+1}$ since \begin{align*}
    x &= 0^n(1v_nd_n^{t-1}0^{n-1})^{2^s}0^{2^s + 1}(1v_{n+1}0^{n+1})^n(1v_{n+1}) \\
    & = (0^n1v_nd_n^{t-1})0^{n-1}(1v_nd_n^{t-1}0^{n-1})^{2^s -1}0^{2^s+1}(1v_{n+1}0^{n+1})^n(1v_{n+1}) \\
     &= B_0(0^{n-1}1v_nd_n^{t-1}0)0^{n-2}(1v_nd_n^{t-1}0^{n-1})^{2^s -1}0^{2^s+1}(1v_{n+1}0^{n+1})^n(1v_{n+1}) \\
     &= B_0B_1(0^{n-2}1v_nd_n^{t-1}0^2)0^{n-3}(1v_nd_n^{t-1}0^{n-1})^{2^s -2}0^{2^s+1}(1v_{n+1}0^{n+1})^n(1v_{n+1})\\
     & \cdots \tag{Keep repeating this process of sectioning into the $2^s$ blocks}\\
    & = B_0B_1 \cdots B_{2^s - 1}0^{n-2^s}0^{2^s + 1}(1v_{n+1}0^{n+1})^n(1v_{n+1}) \\
    & = C_n (0^{n+1}1v_{n+1})^{n+1} = C_n C_{n+1}.
\end{align*}

Next we show that the DFA uniquely accepts $C[0..m]$. Note that all strings accepted have length \[|\overline{C_{n-1}}| + n + (2^nt - 1)a + 2^s + 1 + 2^{n+1}b  + |p_{n+2}| - (n+1)\] where $a \geq 0$ and $b \geq 1$ if $|p_{n+1}| < (n+1)$, else $b$ can possibly be $0$ too. As stated $(a,b) = (2^s,n+1)$ is a solution to the Diophantine equation \begin{equation}
    |\overline{C_{n-1}}| + n + (2^nt - 1)a + 2^s + 1 + 2^{n+1}b  + |p_{n+2}| - (n+1) = |C[0..m]|. \label{Dio}
\end{equation}%

By Theorem \ref{dio thm}, as the first loop has odd length and the second has even length, all solutions to \eqref{Dio} take the form $(2^s + 2^{n+1}c, \,n+1 - (2^nt - 1)c)$ where $c \in \mathbb{Z}$. As $2^s = n/t$ and $t \geq 3$, we have that $(n+1) - (2^nt - 1)c < 0$ when $c > 0$ and $2^s + 2^{n+1}c < 0$ when $c < 0$, it follows that $c=0$ is the only possibility that gives non-negative integer solutions, i.e. $C[0..m]$ is uniquely accepted by the DFA.

The number of states of the automaton is bounded above by \[|\overline{C_{n-1}}| + n + 2^nt + 2^s + 2^{n+1} + |p_{n+2}|.\] As $2^s \leq n/3$, we then have that \begin{align}
    \frac{A(C[0..m])}{m+1} & \leq \frac{|\overline{C_{n-1}}| + 2^nt + \frac{4n}{3} + 2^{n+1} + |p_{n+2}|}{|\overline{C_{n+1}}| + |p_{n+2}|}. \label{ineq}
\end{align}


\noindent Hence for $|p_{n+2}| \leq 2^n(n-t) - \frac{n}{3} + 1 + 2^{n+2}(\frac{n+2}{2})$ we find that

\begin{align}
    \frac{A(C[0..m])}{m+1} & \leq \frac{|\overline{C_{n-1}}| + 2^nt + \frac{4n}{3} + 2^{n+1} + |p_{n+2}|}{|\overline{C_{n+1}}| + |p_{n+2}|} \notag\\
    & \leq \frac{|\overline{C_{n-1}}| + |C_n| + n + 2^{n+1} + 1 + 2^{n+2}(\frac{n+2}{2})}{|\overline{C_{n+1}}| - \frac{n}{3} + 1 + 2^{n+2}(\frac{n+2}{2})} \notag\\
    & = \frac{|\overline{C_n}| + n + 2^{n+1} + 1 + 2^{n+2}(\frac{n+2}{2})}{|\overline{C_{n+1}}| - \frac{n}{3} + 1 + 2^{n+2}(\frac{n+2}{2})} .\label{case 3 auto proof}
\end{align} We note that taking the limit of \eqref{case 3 auto proof} as $n$ increases has a value of $2/3.$

However as $|p_{n+2}|$ increases, it becomes more advantageous to use a loop for the repetitions in $C_{n+2}$ as opposed to the loop for $C_n$ (similar to the proof of Theorem \ref{Normal low complexity}). Worst case scenario, $n+2$ has the form $2^{s'}t'$ for $s'\geq 1$ and $t' \geq 3$ where $t'$ is odd. This results in an automaton of Case $4$ in Figure \ref{fig: Case 1 2 3 4} where the accepting state is one of that states of the second loop. One can show the prefix is uniquely accepted as before with a similar argument. Such an automaton requires at most $|\overline{C_n}| + 2^{n+1} + n + 1 + 2^{n+2}(\frac{n+2}{2})$ states (as $t' \leq (n+2)/2$).

Hence for $j \geq 1$ such that $2^n(n-t) - \frac{n}{3} + 1 + 2^{n+2}(\frac{n+2}{2}) \leq |p_{n+2}|+j < |C_{n+2}|$ we use the automaton from Case $4$ and get that \begin{align}
    \frac{A(C[0..m])}{m+1} & \leq \frac{|\overline{C_n}| + n + 2^{n+1} + 1 + 2^{n+2}(\frac{n+2}{2})}{|\overline{C_{n+1}}| - \frac{n}{3} + 1 + 2^{n+2}(\frac{n+2}{2}) + j}. \label{case 4 constantt state}
\end{align} 
One can see that for such $p_{n+2}$, the number of states of the automaton used to calculate \eqref{case 4 constantt state} remains constant and the ratio decreases as $j$ increases.

Similar calculations for the other three cases show that none achieve an upper bound greater than $2/3$ (see appendix). Therefore $S(C) \leq 2/3$.
\end{proof}


\subsection{Discussion on Lower Bounds for Champernowne Sequences}
\label{Section lower rate}
Let $C \in \mathrm{PSC}$ satisfy Theorem \ref{Champernowne Thm}. As part of Theorem \ref{Champernowne Thm}, we do not provide any insight into the value of $I(C)$. Currently many of the techniques for calculating lower bounds rely on the absence of $k$-powers (such as proofs in \cite{autoFib,DBLP:journals/jalc/ShallitW01}). In particular, Shallit and Wang show that for every $x$ without $k$-powers, $x$ satisfies $A(x) \geq (|x|+1)/k.$ However, as $C$ is a Champernowne sequence, long enough prefixes contain $k$-powers, i.e. there eventually is a substring $x$ such that $x = u^k$ for some string $u$.

However, one can easily identify an upper bound for $I(C)$ as follows. Consider prefixes of the form $\overline{C_{n+1}}$ where $n$ is a power of $2$, i.e. we are in Case $1$. The automaton in Figure \ref{fig: Case 1 2 3 4} for Case $1$ where the final state is contained appropriately in the second loop will uniquely accept the prefix and simple calculations give us that $I(C) \leq 1/4.$

One prefix $x$ of $C$ such that $A(x)/|x| < 1/4$ which is in Case $1$ is $\overline{C_{65}}$. The automaton for Case 1 in Figure \ref{fig: Case 1 2 3 4} which uniquely accepts $\overline{C_{65}}$ has $n_1 = |\overline{C_{63}}| + 2^{64} + 2^{65} +128$ 
states. However, the number of states can be reduced further by using two more loops for zones $C_{62}$ and $C_{63}$ instead of having a state for each of their bits. Consider the DFA $\widehat{M}$ shown in Figure \ref{fig: C 65}:

$\widehat{M}$ reads the prefix $\overline{C_{61}} \cdot 0^{62}$. It then traverses a loop for $1v_{62}(d_{62})^{30}0^{61}$. It then reads $0^3$ and enters a loop for the string $(1v_{63}0^{63})^{21}$. Following this it reads $01v_{64}0^{63}$ and then enters a loop for the string $(1v_{64}0^{63})^{7}$. It then reads $0^{65}$ and enters a loop for the string $(1v_{65}0^{65})^{5}$, with the final state being that state of this loop after reading $(1v_{65}0^{65})^4\cdot1v_{65}.$ $\widehat{M}$ can be thought of as combining the DFAs from Figure \ref{fig: Case 1 2 3 4}, but altering the length of the loops for the zones. Strings of length $|\overline{C_{65}}|$ that $\widehat{M}$ accepts satisfy the equation
\begin{equation}
    |\overline{C_{61}}| + (2^{62}\cdot 31 - 1)a + (21\cdot2^{63})b + (7\cdot(2^{64} -1))c  + (5\cdot2^{65})d + 65 + 2^{64} = |\overline{C_{65}}| \label{65 eqn}
\end{equation}
where it must hold that $d \geq 1$.

$a=2$, $b = 3$, $c = 9$ and $d = 13$ is the only non-negative integer solution to Equation \eqref{65 eqn} and so $\widehat{M}$ uniquely accepts $C_{65}$. $\widehat{M}$ has $n_2 = |\overline{C_{61}}| + 31\cdot 2^{62} + 7\cdot 2^{63} + 8\cdot 2^{64} + 5\cdot 2^{65} + 120$ 
states which is less than $n_1$. Hence $\widehat{M}$ gives us that $A(\overline{C_{65}})/|\overline{C_{65}}| < 0.173 < 1/4.$

\begin{figure}[ht] 
\centering 
\begin{tikzpicture}[->,>=stealth',shorten >=1pt,auto,node distance=2.25cm,
        scale = 0.9,transform shape]
\node[state,initial] (qs) {};
\node[state, right of = qs] (q1){};
\node[state, right of = q1] (q2){};
\node[state, right of = q2] (q3){};
\node[state, right of = q3] (q4){};
\node[state, accepting, above of = q4] (q5){};

\draw
(qs) edge[dashed] node[above]{$\overline{C_{61}}\cdot 0^{62}$} (q1)

(q1) edge[dashed, loop above] node{$(1v_{62}(d_{62})^{30}0^{61})$} (q1)

(q1) edge[dashed, above] node[above]{$0^3$} (q2)

(q2) edge[dashed, loop above] node{$(1v_{63}0^{63})^{21}$} (q2)

(q2) edge[dashed, above] node[above]{$(01v_{64}0^{63})$} (q3)

(q3) edge[dashed, loop above] node{$(1v_{64}0^{63})^7$} (q3)

(q3) edge[dashed, above] node[above]{$0^{65}$} (q4)

(q4) edge[dashed, bend right] node[right]{$(1v_{65}0^{65})^4 1v_{65}$} (q5)

(q5) edge[dashed, bend right] node[left]{$0^{65}$} (q4);

\end{tikzpicture}

\caption{Automaton $\widehat{M}$ for $\overline{C_{65}}$. The dashed arrows represent the missing states belonging to their labels. The
\label{fig: C 65} error state (the state traversed to if the bit seen is not the expected bit) and arrows to it are not included.}
\end{figure}

While the above demonstrates that more than two loops can be used, the size of the loops are limited in each case. The following proposition demonstrates that if reading a zone $C_j$ where $j$ is odd, any loop traversed used to read a substring $x$ of $C_j$, if $|x| \geq j$ then $|x|$ must be a multiple of $2^j$.

\begin{proposition}
Let $C \in \mathrm{PSC}$. Let $j$ be odd and $C'$ be a prefix of $C$ containing the substring $C_j$. Let $p_0,p_1,\ldots p_{2^jj}$ be the sequence of states an automaton which uniquely accepts $C'$ traverses while reading $C_j$. If there is some subsequence of states $p_i,\ldots p_{i+l},\ldots  p_{i + 2l}$ where $l \geq j$ and for all $0 \leq m \leq l$, $p_{i+m} = p_{i + l + m}$, then $l$ must be a multiple of $2^j$.
\label{loop lemma}
\end{proposition}

\begin{proof}

Let $C$, $C'$ and $j$ be as above. First suppose a loop of length $l$ is traversed where $j \leq l \leq 2^j - 1$. Let $x$ be the substring of $C_j$ read during the loop. Hence $x = yz$ where $|y| = j$ and $|z| \leq 2^j - j - 1$. Suppose the loop is traversed twice in a row indicating that $x^2 = yzyz$ is a substring of $C_j$. Consider $yzy$ which has length at most $2^j + j -1$. By the construction of $C_j$, $yzy$ is a prefix of $d_{j,k} \cdot d_{j,k}[0..j-2]$ for some $k$. By the nature of de Bruijn stings, $d_{j,k} \cdot d_{j,k}[0..j-2]$ contains every string of length $j$ as a substring exactly once. However, $y$ is a substring of length $j$ contained twice giving us a contradiction.

Next suppose a loop of length $l$ is traversed  where $d\cdot2^j < l < (d+1)\cdot 2^{j}$ for some $d \leq \lfloor j/2 \rfloor$ as if $d > \lfloor j/2 \rfloor$, traversing the loop twice would result in a string longer than $C_n$ being read. Let $x$ be the string read while traversing the loop. Hence $x = yz$ where $|y| = d\cdot 2^j$ and $1 \leq |z| < 2^j$. Suppose the loop is traversed twice in a row indicating that $x^2 = yzyz$ is a substring of $C_j$. By the construction of $C_j$, $yzy = (d_{j,k})^d z (d_{j,k})^d$ for some $k$ where $z$ is a prefix of $d_{j,k}$. This forces $z = \lambda$ or $z = d_{j,k}$ which is a contradiction.
\end{proof}

Similar results to the above proposition can be shown for $n$ even also. For instance, if $n$ is a power of $2$, loops of length larger than $n$ in zone $C_n$ have to be a multiple of $2^n-1$. Details can be found in the appendix.

\begin{question}
    Let $C \in \mathrm{PSC}$ satisfy Theorem \ref{Champernowne Thm}. Is there a limit to the number of beneficial loops we can use to ensure prefixes of $C$ are uniquely accepted? Finding the value of $I(C)$ is left as an open question. For instance, is $I(C) > 0$?
\end{question}

\begin{figure}[ht] 
\centering 
\begin{tikzpicture}[->,>=stealth',shorten >=1pt,auto,node distance=2.25cm,
        scale = 0.9,transform shape]
\node[state,initial] (qs) {$1$};
\node[state, below of=qs]  (A0)  {};
\node[state, right of=A0]  (A1)  {};
\node[state,  above of =A1]  (Au)  {};
\node[state, below of = A0] (B0) { };
\node[state, right of=B0]  (B1)  {};
\node[state,  below of =B1]  (Bu) {};
\node[state, below of = B0,accepting] (f) {};

\draw(qs) edge[dashed] node[left,midway]{ $\overline{C_{n-1}}0^n$} (A0)

(A0)  edge[right] node[below,midway]{ $1$} (A1)

(A1) edge[dashed] node[right,midway]{$v_n$} (Au)

(Au) edge[dashed] node[right,midway]{$0^{n-1}$} (A0)


(A0) edge[dashed] node[left,midway]{$0^{n+1}$} (B0)

(B0)  edge[right] node[above,midway]{ $1$} (B1)

(B1) edge[dashed] node[right,midway]{$v_{n+1}$} (Bu)

(Bu) edge[dashed] node[right,midway]{$0^{n+1}$} (B0)

(B0) edge[dashed] node[left,midway]{$C_{n+2}[n+1..]$} (f);

\node[state,initial, below of = f] (q3) {$3$};
\node[state, below of=q3] (c0)  {};
\node[state, below of=c0]  (C0)  {};
\node[state, right of=C0]  (C1)  {};
\node[state,  above of =C1]  (Cu)  {};
\node[state,  right of =q3]  (Cx)  {};
\node[state, below of = C0] (D0) { };
\node[state, right of=D0]  (D1)  {};
\node[state,  below of =D1]  (Du) {};
\node[state, below of = D0,accepting] (f2) {};

\node [right=of C0,  xshift=-20mm, yshift=14mm] {$0^{n-1}$}; 

\draw(q3) edge[dashed] node[left,midway]{ $\overline{C_{n-1}}$} (c0)

(c0) edge[dashed] node[left,midway]{ $0^{n}$} (C0)

(C0)  edge[right] node[below,midway]{ $1$} (C1)

(C1) edge[dashed] node[right,midway]{$v_n$} (Cu)

(Cu) edge[dashed] node[right,midway]{$d_n^{t-1}$} (Cx)

(Cx) edge[dashed] (C0)

(C0) edge[dashed] node[left,midway]{$0^{2^s+1}$} (D0)

(D0)  edge[right] node[below,midway]{ $1$} (D1)

(D1) edge[dashed] node[right,midway]{$v_{n+1}$} (Du)

(Du) edge[dashed] node[right,midway]{$0^{n+1}$} (D0)

(D0) edge[dashed] node[left,midway]{$C_{n+2}[n+1..]$} (f2);
\end{tikzpicture}
\qquad
\begin{tikzpicture}[->,>=stealth',shorten >=1pt,auto,node distance=2.25cm,
        scale = 0.9,transform shape]
\node[state,initial] (qs) {2};
\node[state, below of=qs] (A0) {};
\node[state, right of=A0]  (A1)  {};
\node[state,  above of =A1]  (Au)  {};
\node[state, below of = A0] (B0) { };
\node[state, right of=B0]  (B1)  {};
\node[state,  below of =B1]  (Bu) {};
\node[state, below of = B0,accepting] (f) {};

\draw(qs) edge[dashed] node[left,midway]{ $\overline{C_{n-1}}0^n$} (A0)

(A0)  edge[right] node[below,midway]{ $1$} (A1)

(A1) edge[dashed] node[right,midway]{$v_n$} (Au)

(Au) edge[dashed] node[right,midway]{$0^n$} (A0)

(A0) edge node[left,midway]{$0$} (B0)

(B0)  edge[right] node[above,midway]{ $1$} (B1)

(B1) edge[dashed] node[right,midway]{$v_{n+1}$} (Bu)

(Bu) edge[dashed] node[right,midway]{$0^n$} (B0)

(B0) edge[dashed] node[left,midway]{$C_{n+2}$} (f);

\node[state,initial, below of = f] (q4) {$4$};
\node[state, below of=q4]  (C0)  {};
\node[state, right of=C0]  (C1)  {};
\node[state,  above of =C1]  (Cu)  {};
\node[state, below of = C0] (D0) {};
\node[state, right of=D0]  (D1)  {};
\node[state,  below of =D1]  (Du) {};
\node[state,  below of =Du]  (Dx) {};

\node[state, left of = Dx,accepting] (f2) {};

\node [right=of D0,  xshift=-20mm, yshift=-14mm] {$0^{n}$}; 

\draw(q4) edge[dashed] node[left,midway]{ $\overline{C_{n-1}}0^n$} (C0)

(C0)  edge[right] node[below,midway]{ $1$} (C1)

(C1) edge[dashed] node[right,midway]{$v_n$} (Cu)

(Cu) edge[dashed] node[right,midway]{$0^n$} (C0)

(C0) edge node[left,midway]{$0$} (D0)

(D0)  edge[right] node[above,midway]{ $1$} (D1)

(D1) edge[dashed] node[right,midway]{$v_{n+1}$} (Du)

(Du) edge[dashed] node[right,midway]{$d_{n+1}^{t'-1}$} (Dx)

(Dx) edge[dashed] (D0)

(D0) edge[dashed] node[left,midway]{$0^{2^{s'} + 1}C_{n+2}[n+2..]$} (f2);

\end{tikzpicture}

\caption{Automaton for Case $1$ (top left), Case $2$ (top right),  Case $3$ (bottom left) and Case $4$ (bottom right). The dashed arrows represent the missing states belonging to their labels. The error state (the state traversed to if the bit seen is not the expected bit) and arrows to it are not included in each of the four diagrams. We point the reader to Figures \ref{C34} and \ref{C6} to help visualise the bits read on a single traversal of a loop.}
\label{fig: Case 1 2 3 4}
\end{figure}


\clearpage
\bibliography{biblio.bib}

\begin{thebibliography}{10}

\bibitem{DBLP:journals/jcss/BecherCH15}
Ver{\'{o}}nica Becher, Olivier Carton, and Pablo~Ariel Heiber.
\newblock Normality and automata.
\newblock {\em J. Comput. Syst. Sci.}, 81(8):1592--1613, 2015.
\newblock \href {https://doi.org/10.1016/j.jcss.2015.04.007}
  {\path{doi:10.1016/j.jcss.2015.04.007}}.

\bibitem{DBLP:journals/tcs/BecherH13}
Ver{\'{o}}nica Becher and Pablo~Ariel Heiber.
\newblock Normal numbers and finite automata.
\newblock {\em Theor. Comput. Sci.}, 477:109--116, 2013.
\newblock \href {https://doi.org/10.1016/j.tcs.2013.01.019}
  {\path{doi:10.1016/j.tcs.2013.01.019}}.

\bibitem{borelNormal}
{\'{E}}mile Borel.
\newblock Les probabilités dénombrables et leurs applications arithmétiques.
\newblock {\em Rendiconti del Circolo Matematico di Palermo}, 27(1):247--271,
  1909.
\newblock \href {https://doi.org/10.1007/BF03019651}
  {\path{doi:10.1007/BF03019651}}.

\bibitem{debruijn1946combinatorial}
Nicolaas Govert~De Bruijn.
\newblock A combinatorial problem.
\newblock In {\em Proc. Koninklijke Nederlandse Academie van Wetenschappen},
  volume~49, pages 758--764, 1946.

\bibitem{DBLP:journals/tcs/CaludeSR11}
Cristian~S. Calude, Kai Salomaa, and Tania Roblot.
\newblock Finite state complexity.
\newblock {\em Theor. Comput. Sci.}, 412(41):5668--5677, 2011.
\newblock \href {https://doi.org/10.1016/j.tcs.2011.06.021}
  {\path{doi:10.1016/j.tcs.2011.06.021}}.

\bibitem{DBLP:journals/mst/CaludeS18}
Cristian~S. Calude and Ludwig Staiger.
\newblock Liouville, computable, {B}orel normal and {M}artin-l{\"{o}}f random
  numbers.
\newblock {\em Theory Comput. Syst.}, 62(7):1573--1585, 2018.
\newblock \href {https://doi.org/10.1007/s00224-017-9767-8}
  {\path{doi:10.1007/s00224-017-9767-8}}.

\bibitem{DBLP:journals/iandc/CaludeSS16}
Cristian~S. Calude, Ludwig Staiger, and Frank Stephan.
\newblock Finite state incompressible infinite sequences.
\newblock {\em Inf. Comput.}, 247:23--36, 2016.
\newblock \href {https://doi.org/10.1016/j.ic.2015.11.003}
  {\path{doi:10.1016/j.ic.2015.11.003}}.

\bibitem{DBLP:journals/iandc/CartonH15}
Olivier Carton and Pablo~Ariel Heiber.
\newblock Normality and two-way automata.
\newblock {\em Inf. Comput.}, 241:264--276, 2015.
\newblock \href {https://doi.org/10.1016/j.ic.2015.02.001}
  {\path{doi:10.1016/j.ic.2015.02.001}}.

\bibitem{champernowne}
D.~G. Champernowne.
\newblock The construction of decimals normal in the scale of ten.
\newblock {\em J. Lond. Math. Soc.}, s1-8(4):254--260, 1933.
\newblock \href {https://doi.org/10.1112/jlms/s1-8.4.254}
  {\path{doi:10.1112/jlms/s1-8.4.254}}.

\bibitem{DBLP:journals/tcs/DaiLLM04}
Jack~J. Dai, James~I. Lathrop, Jack~H. Lutz, and Elvira Mayordomo.
\newblock Finite-state dimension.
\newblock {\em Theor. Comput. Sci.}, 310(1-3):1--33, 2004.
\newblock \href {https://doi.org/10.1016/S0304-3975(03)00244-5}
  {\path{doi:10.1016/S0304-3975(03)00244-5}}.

\bibitem{DotyMoserLossy}
David Doty and Philippe Moser.
\newblock Finite-state dimension and lossy decompressors.
\newblock {\em CoRR}, abs/cs/0609096, 2006.
\newblock \href {http://arxiv.org/abs/cs/0609096} {\path{arXiv:cs/0609096}}.

\bibitem{DBLP:conf/cie/DotyM07}
David Doty and Philippe Moser.
\newblock Feasible depth.
\newblock In {\em CiE 2007: Computation and Logic in the Real World - Siena,
  Italy}, volume 4497 of {\em {LNCS}}, pages 228--237. Springer, 2007.
\newblock \href {https://doi.org/10.1007/978-3-540-73001-9\_24}
  {\path{doi:10.1007/978-3-540-73001-9\_24}}.

\bibitem{DBLP:journals/dm/FredricksenK86}
Harold Fredricksen and Irving~J. Kessler.
\newblock An algorithm for generating necklaces of beads in two colors.
\newblock {\em Discret. Math.}, 61(2-3):181--188, 1986.
\newblock \href {https://doi.org/10.1016/0012-365X(86)90089-0}
  {\path{doi:10.1016/0012-365X(86)90089-0}}.

\bibitem{DBLP:journals/dm/FredricksenM78}
Harold Fredricksen and James Maiorana.
\newblock Necklaces of beads in {$k$} colors and {$k$}-ary de {B}ruijn
  sequences.
\newblock {\em Discret. Math.}, 23(3):207--210, 1978.
\newblock \href {https://doi.org/10.1016/0012-365X(78)90002-X}
  {\path{doi:10.1016/0012-365X(78)90002-X}}.

\bibitem{HydeThesis}
Kayleigh Hyde.
\newblock Nondeterministic finite state complexity.
\newblock Master's thesis, University of Hawaii at Manoa, 2013.
\newblock Accessed: April 20, 2021.
\newblock URL: \url{http://hdl.handle.net/10125/29507}.

\bibitem{DBLP:journals/combinatorics/HydeK15}
Kayleigh Hyde and Bj{\o}rn Kjos{-}Hanssen.
\newblock Nondeterministic automatic complexity of overlap-free and almost
  square-free words.
\newblock {\em Electron. J. Comb.}, 22(3):P3.22, 2015.
\newblock \href {https://doi.org/10.37236/4851} {\path{doi:10.37236/4851}}.

\bibitem{DBLP:conf/sofsem/JordonM20}
Liam Jordon and Philippe Moser.
\newblock On the difference between finite-state and pushdown depth.
\newblock In {\em 46th International Conference on Current Trends in Theory and
  Practice of Informatics, {SOFSEM} 2020, Limassol, Cyprus}, volume 12011 of
  {\em {LNCS}}, pages 187--198. Springer, 2020.
\newblock \href {https://doi.org/10.1007/978-3-030-38919-2\_16}
  {\path{doi:10.1007/978-3-030-38919-2\_16}}.

\bibitem{DBLP:conf/sofsem/JordonM21}
Liam Jordon and Philippe Moser.
\newblock {A normal sequence compressed by {P}{P}{M}* but not by {L}empel-{Z}iv
  78}.
\newblock In {\em 47th International Conference on Current Trends in Theory and
  Practice of Computer Science, {SOFSEM} 2021, Bolzano-Bozen, Italy}, volume
  12607 of {\em {LNCS}}, pages 389--399. Springer, 2021.
\newblock \href {https://doi.org/10.1007/978-3-030-67731-2\_28}
  {\path{doi:10.1007/978-3-030-67731-2\_28}}.

\bibitem{autoFib}
Bj{\o}rn Kjos{-}Hanssen.
\newblock Automatic complexity of {F}ibonacci and {T}ribonacci words.
\newblock {\em Discrete Applied Mathematics}, 289:446 -- 454, 2021.
\newblock \href {https://doi.org/10.1016/j.dam.2020.10.014}
  {\path{doi:10.1016/j.dam.2020.10.014}}.

\bibitem{autoLSFR}
Bjørn Kjos-Hanssen.
\newblock Automatic complexity of shift register sequences.
\newblock {\em Discrete Mathematics}, 341(9):2409--2417, 2018.
\newblock \href {https://doi.org/10.1016/j.disc.2018.05.015}
  {\path{doi:10.1016/j.disc.2018.05.015}}.

\bibitem{DBLP:journals/jcss/KozachinskiyS21}
Alexander Kozachinskiy and Alexander Shen.
\newblock Automatic {K}olmogorov complexity, normality, and finite-state
  dimension revisited.
\newblock {\em J. Comput. Syst. Sci.}, 118:75--107, 2021.
\newblock \href {https://doi.org/10.1016/j.jcss.2020.12.003}
  {\path{doi:10.1016/j.jcss.2020.12.003}}.

\bibitem{DBLP:conf/sequences/LathropS97}
James~I. Lathrop and Martin Strauss.
\newblock A universal upper bound on the performance of the {L}empel-{Z}iv
  algorithm on maliciously-constructed data.
\newblock In {\em Compression and Complexity of {SEQUENCES} 1997, Positano,
  Amalfitan Coast, Salerno, Italy, June 11-13, 1997, Proceedings}, pages
  123--135. {IEEE}, 1997.
\newblock \href {https://doi.org/10.1109/SEQUEN.1997.666909}
  {\path{doi:10.1109/SEQUEN.1997.666909}}.

\bibitem{martin1934}
M.~H. Martin.
\newblock A problem in arrangements.
\newblock {\em Bull. Amer. Math. Soc.}, 40(12):859--864, 1934.
\newblock \href {https://doi.org/10.1090/S0002-9904-1934-05988-3}
  {\path{doi:10.1090/S0002-9904-1934-05988-3}}.

\bibitem{DBLP:journals/mst/NandakumarV16}
Satyadev Nandakumar and Santhosh~Kumar Vangapelli.
\newblock Normality and finite-state dimension of {L}iouville numbers.
\newblock {\em Theory Comput. Syst.}, 58(3):392--402, 2016.
\newblock \href {https://doi.org/10.1007/s00224-014-9554-8}
  {\path{doi:10.1007/s00224-014-9554-8}}.

\bibitem{pierce2000sequences}
Larry~A. Pierce and Paul~C. Shields.
\newblock Sequences incompressible by {S}{L}{Z} ({L}{Z}{W}), yet fully
  compressible by {U}{L}{Z}.
\newblock In {\em Numbers, Information and Complexity}, pages 385--390.
  Springer, 2000.
\newblock \href {https://doi.org/10.1007/978-1-4757-6048-4\_32}
  {\path{doi:10.1007/978-1-4757-6048-4\_32}}.

\bibitem{rotmanGroups}
Joseph~J. Rotman.
\newblock {\em An Introduction to the Theory of Groups}.
\newblock Graduate Texts in Mathematics. Springer, New York, 1995.
\newblock ISBN: 978-1-4612-8686-8.
\newblock \href {https://doi.org/10.1007/978-1-4612-4176-8}
  {\path{doi:10.1007/978-1-4612-4176-8}}.

\bibitem{FlyedeBruijn}
Camille~Flye Sainte-Marie.
\newblock Solution to question nr. {$48$}.
\newblock In {\em L'Interm\'ediaire des Math\'ematiciens}, volume~1, pages
  107--110, 1894.

\bibitem{DBLP:journals/acta/SchnorrS72}
Claus{-}Peter Schnorr and H.~Stimm.
\newblock Endliche {A}utomaten und {Z}ufallsfolgen.
\newblock {\em Acta Inf.}, 1:345--359, 1972.
\newblock \href {https://doi.org/10.1007/BF00289514}
  {\path{doi:10.1007/BF00289514}}.

\bibitem{DBLP:journals/jalc/ShallitW01}
Jeffrey~O. Shallit and Ming{-}wei Wang.
\newblock Automatic complexity of strings.
\newblock {\em J. Autom. Lang. Comb.}, 6(4):537--554, 2001.
\newblock \href {https://doi.org/10.25596/jalc-2001-537}
  {\path{doi:10.25596/jalc-2001-537}}.

\bibitem{DBLP:conf/stoc/Sipser83a}
Michael Sipser.
\newblock A complexity theoretic approach to randomness.
\newblock In {\em Proceedings of the 15th Annual {ACM} Symposium on Theory of
  Computing, 1983, Boston, Massachusetts, {USA}}, pages 330--335. {ACM}, 1983.
\newblock \href {https://doi.org/10.1145/800061.808762}
  {\path{doi:10.1145/800061.808762}}.

\end{thebibliography}

\newpage
\section*{Appendix}

\subsection*{The other three cases for the proof of Theorem \ref{Champernowne Thm}}
The following is the reasoning for why Equation \eqref{case 3 auto proof} from the proof of Theorem \ref{Champernowne Thm} gives us the upper bound of $S(C)$. We perform similar calculations as in the proof for Cases $1,2$ and $4$ to see this.

\textbf{Case 1:} $n$ is a power of $2$.

Suppose we are in Case $1$. Let $p_{n+2}$ be such that $C[0..m] = \overline{C_{n+1}}p_{n+2}$. We have a automaton as in Case $1$. It requires at most $|\overline{C_{n-1}}|+ 1 + 2n + 2^n + 2^{n+1} + |p_{n+2}|$ states. Thus \begin{align*}
    \frac{A(C[0..m])}{m+1} & \leq \frac{|\overline{C_{n-1}}| + 1 + 2n + 2^n + 2^{n+1} + |p_{n+2}|}{|\overline{C_{n+1}}| + |p_{n+2}|}.
\end{align*}

For $|p_{n+2}|$ long enough, it is more beneficial to loop in $C_{n+2}$ instead of $C_n$ and use an automaton in the style of Case $4$ since worst case scenario $n+2$ has the form $2^{s'}t'$. Such an automaton has at most $|\overline{C_n}| + 1 + n + 2^{n+1} + 2^{n+2}(\frac{n+2}{2})$ states.  

So for $|p_{n+2}| \leq 2^n(n-1) -n + 2^{n+2}(\frac{n+2}{2})$ \begin{align*}
    \frac{A(C[0..m])}{m+1} & \leq \frac{|\overline{C_{n-1}}| + 1 + 2n + 2^n + 2^{n+1} + |p_{n+2}|}{|\overline{C_{n+1}}| + |p_{n+2}|} \\
    &\leq \frac{|\overline{C_{n-1}}| + 1 + 2n - n + 2^n + 2^n(n-1) + 2^{n+1}  + 2^{n+2}(\frac{n+2}{2})}{|\overline{C_{n+1}}| - n + 2^n(n-1) + 2^{n+2}(\frac{n+2}{2})} \\
    & = \frac{|\overline{C_n}| + 1+ n + 2^{n+1} + 2^{n+2}(\frac{n+2}{2})}{|\overline{C_{n+1}}| -n + 2^n(n-1) + 2^{n+2}(\frac{n+2}{2})}.
\end{align*}

For $j\geq 1$ such that $2^n(n-1) -n + 2^{n+2}(\frac{n+2}{2}) \leq |p_{n+2}|+j < |C_{n+2}|$ we use an automaton from Case $4$ and have that \[\frac{A(C[0..m])}{m+1} \leq \frac{|\overline{C_n}| +1 + n + 2^{n+1} + 2^{n+2}(\frac{n+2}{2})}{|\overline{C_{n+1}}| -n + 2^n(n-1) + 2^{n+2}(\frac{n+2}{2}) + j},\] i.e. the number of states required remains constant. Furthermore \[\lim_{n \to \infty}\frac{|\overline{C_n}| + 1 + n + 2^{n+1} + 2^{n+2}(\frac{n+2}{2}) }{|\overline{C_{n+1}}|-n + 2^n(n-1) + 2^{n+2}(\frac{n+2}{2})} =\frac{4}{7} < \frac{2}{3}.\]

\textbf{Case 2:} $n+1$ is a power of $2$

Suppose we are in Case $2$. Let $p_{n+2}$ be such that $C[0..m] = \overline{C_{n+1}}p_{n+2}$. We have a automaton as in Case $2$. It requires at most $|\overline{C_{n-1}}| + n + 2^n + 2^{n+1} + |p_{n+2}|$ states. Thus \begin{align*}
    \frac{A(C[0..m])}{m+1} & \leq \frac{|\overline{C_{n-1}}| + n + 2^n + 2^{n+1} + |p_{n+2}|}{|\overline{C_{n+1}}| + |p_{n+2}|}.
\end{align*}

For $|p_{n+2}|$ long enough, it is more beneficial to loop in $C_{n+2}$ instead of $C_n$ and use an automaton in the style of Case $1$ as $n+2$ is odd and $n+1$ is a power of $2$. As the final state will be contained in the second loop, such an automaton requires $|\overline{C_n}| + 2+ 2n  + 2^{n+1} + 2^{n+2}$ states.

So for $|p_{n+2}| \leq 2 + n + 2^n(n-1) + 2^{n+2}$ \begin{align*}
    \frac{A(C[0..m])}{m+1} &\leq \frac{|\overline{C_{n-1}}| + n + 2^n + 2^{n+1} + |p_{n+2}|}{|\overline{C_{n+1}}| + |p_{n+2}|} \\
    & \leq \frac{|\overline{C_{n-1}}| + 2 +n + n + 2^n + 2^n(n-1) + 2^{n+1}  + 2^{n+2}}{|\overline{C_{n+1}}| + 2 + n + 2^n(n-1) + 2^{n+2} } \\
    & = \frac{|\overline{C_{n}}| + 2 + 2n + 2^{n+1} + 2^{n+2}}{|\overline{C_{n+1}}| + 2 + n + 2^n(n-1) + 2^{n+2}}.
\end{align*}

For $j\geq 1$ such that $2+n + 2^n(n-1) + 2^{n+2} \leq |p_{n+2}|+j < |C_{n+2}|$ we use an automaton from Case $1$ and have that  \[\frac{A(C[0..m])}{m+1} \leq \frac{|\overline{C_n}| + 2+ 2n  + 2^{n+1} + 2^{n+2}}{|\overline{C_{n+1}}| + 2+n + 2^n(n-1) + 2^{n+2} + j},\] i.e. the number of states required remains constant. Furthermore \[\lim_{n \to \infty}\frac{|\overline{C_{n}}| + 2 + 2n + 2^{n+1} + 2^{n+2}}{|\overline{C_{n+1}}| + 2 + n + 2^n(n-1) + 2^{n+2}} = \frac{2}{5} < \frac{2}{3}.\]

\textbf{Case 4:} $n+1$ is even but not a power of $2$

Suppose we are in Case $4$, i.e. $n+1 = 2^st$ for some $s\geq 1$ and $t \geq 3$ odd. Let $p_{n+2}$ be such that $C[0..m] = \overline{C_{n+1}}p_{n+2}$. We have a automaton as in Case $4$. It requires at most $|\overline{C_{n-1}}|+ n + 2^n + 2^{n+1}t + |p_{n+2}|$ states.

Thus \begin{align*}
    \frac{A(C[0..m])}{m+1} & \leq \frac{|\overline{C_{n-1}}|+ n + 2^n + 2^{n+1}t + |p_{n+2}|}{|\overline{C_{n+1}}| + |p_{n+2}|}.
\end{align*}

For $|p_{n+2}|$ long enough, it is more beneficial to loop in $C_{n+2}$ instead of $C_n$ and use an automaton in the style of Case $3$ as $n+2$ is odd and $n+1$ is even but not power of $2$. As the final state will be contained in the second loop, such an automaton requires $|\overline{C_n}| + n  + 2^{n+1}t + 2^s + 2^{n+2}$ states. Hence, as $2^s \leq n/3$ and we have that the number of states required is bounded above by $|\overline{C_n}| + 4n/3  + 2^{n+1}t  + 2^{n+2}$.

So for $|p_{n+2}| \leq  n/3 + 2^n(n-1) + 2^{n+2}$ 
\begin{align*}
    \frac{A(C[0..m])}{m+1} & \leq \frac{|\overline{C_{n-1}}|+ n + 2^n + 2^{n+1}t + |p_{n+2}|}{|\overline{C_{n+1}}| + |p_{n+2}|} \\
    & \leq \frac{|\overline{C_{n-1}}|+ n + \frac{n}{3} + 2^n + 2^n(n-1) + 2^{n+1}t + 2^{n+2}}{|\overline{C_{n+1}}| + \frac{n}{3} + 2^n(n-1) + 2^{n+2}} \\
    & = \frac{|\overline{C_{n}}|+ \frac{4n}{3}   + 2^{n+1}t + 2^{n+2}}{|\overline{C_{n+1}}| + \frac{n}{3} + 2^n(n-1) + 2^{n+2}} \\
\end{align*}

For $j\geq 1$ such that $n/3 + 2^n(n-1) + 2^{n+2} \leq |p_{n+2}|+j < |C_{n+2}|$ we use an automaton from Case $3$ and have that \[\frac{A(C[0..m])}{m+1} \leq \frac{|\overline{C_{n}}|+ \frac{4n}{3}   + 2^{n+1}t + 2^{n+2}}{|\overline{C_{n+1}}| + \frac{n}{3} + 2^n(n-1) + 2^{n+2} + j},\] i.e. the number of states required remains constant. Furthermore \begin{align*}
    \lim_{n \to \infty} \frac{|\overline{C_{n}}|+ \frac{4n}{3}   + 2^{n+1}t + 2^{n+2}}{|\overline{C_{n+1}}| + \frac{n}{3} + 2^n(n-1) + 2^{n+2}} & \leq \lim_{n \to \infty} \frac{|\overline{C_{n}}|+ \frac{4n}{3}   + 2^{n+1}\frac{(n+1)}{2} + 2^{n+2}}{|\overline{C_{n+1}}| + \frac{n}{3} + 2^n(n-1) + 2^{n+2}} \\
    & = \frac{3}{5} < \frac{2}{3}.
\end{align*}

\subsection*{Analogous Result to Proposition \ref{loop lemma}}
The following is analogous for Proposition \ref{loop lemma} and demonstrates that if reading a zone $C_j$ where $j = 2^k$ for $k\geq 1$, any loop traversed used to read a substring $x$ of $C_j$, if $|x| \geq j$ then $|x|$ must be a multiple of $2^j-1$.

\begin{proposition}
Let $C\in \mathrm{PSC}$. Let $j = 2^k$ for some $k\geq 1$ and let $C'$ be a prefix of $C$ containing the substring $C_j$. Let $p_0 \rightarrow p_1 \rightarrow \cdots \rightarrow p_{2^jj}$ be the sequence of states an automaton which uniquely accepts $C'$ traverses while reading $C_j$. If there is some subsequence of states $p_i \rightarrow \cdots p_{i+l}  \rightarrow \cdots  p_{i + 2l}$ where $l \geq j$ and for all $0 \leq m \leq l$, $p_{i+m} = p_{i + l + m}$, then $l$ must be a multiple of $2^j-1$.
\label{loop lemma even}
\end{proposition}

\begin{proof}
Let $C$, $C'$, $j$ and $k$ be as above. Recall in such a case that \[C_j = d_j\cdot d_{j,1} \cdot d_{j,2} \cdots d_{j,2^{j-1}} = 0\cdot (d_j[1..2^j-1])^j\cdot 0^{j-1}\] for some de Bruijn string $d_j$.
First suppose a loop of length $l$ is traversed where $j \leq l \leq 2^j - 2$. Let $x$ be the substring of $C_j$ read during the loop. Hence $x = yz$ where $|y| = j$ and $|z| \leq 2^j - j - 2$. Suppose the loop is traversed twice in a row indicating that $x^2 = yzyz$ is a substring of $C_j$. Consider $yzy$ which has length at most $2^j + j -2$. By the construction of $C_j$, every substring of length $2^j + j - 2$ contains $2^j - 1$ of the strings of length $j$ as a substring exactly once where either $0^j$ or $10^{j-1}$ is the remaining string that does not appear. If $y = 0^j$ then $10^{j-1}$ is missing, otherwise $0^j$ is missing due to this shift between instances of the de Bruijn strings. However, $y$ is then a substring of length $j$ contained twice withing a substring of length $2^j + j - 2$ giving us a contradiction.

Next suppose a loop of length $l$ is traversed  where $d(2^j-1) < l < (d+1)(2^{j}-1)$ for some $d \leq \lfloor j/2 \rfloor$. Let $x$ be the string read while traversing the loop. Hence $x = yz$ where $|y| = d( 2^j-1)$ and $1 \leq |z| < 2^j-1$. Note that $y = (d_j[i..2^j-1]\cdot d_j[1..i-1])^d$ for some $i\geq 1$ ($i \neq 0$ as $\occ{0^j}{C_j} = 1$). Suppose the loop is traversed twice in a row indicating that $x^2 = yzyz$ is a substring of $C_j$. By the construction of $C_j$, this means that $z$ is a prefix of $d_j[i..2^j-1]\cdot d_j[1..i-1]$. This forces $z = \lambda$ or $z = d_j[i..2^j-1]\cdot d_j[1..i-1]$ which contradicts the length requirement of $z$.
\end{proof}

\end{document}